\documentclass[showpacs,amsmath,12pt,prd,onecolumn,nofootinbib]{revtex4}
\usepackage{graphicx} 
\usepackage{amsmath}
\usepackage{amssymb}
\usepackage{amsthm}
\usepackage{color}
\usepackage{tikz}
\usepackage{pgfplots}
\usepackage{tabularx} 
\usepackage{tikz}
\usetikzlibrary{arrows.meta,positioning,shadows}
\newtheorem{theorem}{Theorem}
\newtheorem{proposition}{Proposition}
\newtheorem{remark}{Remark}
\newtheorem{definition}{Definition}
\newtheorem{lemma}{Lemma}
\newtheorem{corollary}{Corollary}

\begin{document}
\title{Critical Coupling Surfaces in $\kappa(R,T)$ Gravity:\\
Regularity, Gravitational Screening, and Phase Transitions}

\author{Ginés R. Pérez Teruel}
\email{gines.landau@gmail.com}
\affiliation{Consellería de Educación, Cultura, Universidades y Empleo, Ministerio de Educación y Formación Profesional, Spain}

\begin{abstract}
We investigate the critical regime $\kappa(R,T)=0$ in $\kappa(R,T)$ gravity. While most studies assume a non--vanishing effective gravitational coupling, the existence of critical hypersurfaces where $\kappa$ vanishes is a generic feature of many admissible coupling functions. We show that the apparent singularity of the non--conservation equation is an artifact of a rewritten form of the conservation law and that the fundamental equations remain regular at $\kappa=0$. We further analyze the structure of critical hypersurfaces, derive the associated compatibility condition $(\nabla^\mu\kappa)T_{\mu\nu}=0$, and discuss their interpretation as gravitational screening surfaces separating attractive and repulsive gravitational phases. The existence of critical coupling hypersurfaces also obstructs a global Einstein--frame description, distinguishing $\kappa(R,T)$ gravity from theories based solely on algebraic redefinitions of the energy--momentum tensor. Possible cosmological and astrophysical consequences are briefly explored.
\end{abstract}

\maketitle

\section{Introduction}

One of the most important features of General Relativity (GR) is the universality of the gravitational coupling. In Einstein's theory, Newton's constant is assumed to be a fundamental constant of nature, independent of the local properties of spacetime and matter. Nevertheless, several modified theories of gravity have explored the possibility that the effective gravitational coupling may vary dynamically, either through additional scalar degrees of freedom, as in scalar--tensor theories \cite{BransDicke1961,FujiiMaeda2003,Faraoni2004}, through curvature-dependent modifications such as $f(R)$ gravity \cite{SotiriouFaraoni2010,DeFeliceTsujikawa2010}, or through explicit matter--geometry couplings as in $f(R,T)$ gravity \cite{Harko2011}. These approaches suggest that the strength of gravity may emerge as an effective quantity whose value depends on the local dynamical state of the gravitational and matter fields.

Among these alternatives, $\kappa(R,T)$ gravity was proposed as a phenomenological framework in which the Einstein gravitational coupling is promoted to an effective scalar function of the Ricci scalar $R$ and the trace $T=g^{\mu\nu}T_{\mu\nu}$
of the stress-energy tensor \cite{Teruel2018}. The field equations take the form
\begin{equation}
G_{\mu\nu}-\Lambda g_{\mu\nu}
=
\kappa(R,T)T_{\mu\nu},
\label{eq:field}
\end{equation}
where the effective coupling $\kappa(R,T)$ encodes possible interactions between geometry and matter beyond those present in GR. A distinctive feature of the theory is that the conservation law of the stress-energy tensor is modified. Taking the covariant divergence of Eq.~(\ref{eq:field}) and using the contracted Bianchi identities leads to
\begin{equation}
\nabla^\mu\!\left[\kappa(R,T)T_{\mu\nu}\right]=0,
\label{eq:fundamental_conservation_intro}
\end{equation}
which replaces the standard relation $\nabla^\mu T_{\mu\nu}=0$. Consequently, the effective quantity that remains conserved is the product $\kappa T_{\mu\nu}$ rather than the stress-energy tensor itself.

Although $\kappa(R,T)$ gravity was originally formulated at the level of field equations rather than from a variational principle, this situation is
not unprecedented in the history of theoretical physics. Maxwellian electrodynamics and GR were initially developed through their field equations before their modern action formulations became
established. The absence of a known action principle therefore does not by itself invalidate a gravitational theory, although the construction of a
consistent variational formulation remains an important open problem.\\

Since its original proposal, $\kappa(R,T)$ gravity has been applied to a variety of physical scenarios. Cosmological models exhibiting accelerated expansion have been investigated \cite{Ahmed2021,Dixit2022}, together with thermodynamic aspects of cosmic evolution \cite{Dixit2023}. The theory has also been used in the study of compact stars \cite{TeruelSinghRahaman2022, Taser2023}, wormholes \cite{Singh2024, Sarkar}, gravastars \cite{gravastar}, and exact space-time solutions. These investigations suggest that the framework is sufficiently flexible to accommodate a broad range of gravitational phenomena. Despite these developments, one conceptual and theoretical aspect of the theory has received little attention. Most studies implicitly assume that
$\kappa(R,T)\neq0$ throughout the space-time region under consideration. However, for a wide class of admissible coupling functions, the existence of critical hypersurfaces satisfying
$\kappa(R,T)=0$ is a generic possibility. At first sight, this condition appears problematic. Indeed, the non-conservation equation commonly used in the literature,
\begin{equation}
\nabla^\mu T_{\mu\nu}
=
-\frac{\nabla^\mu\kappa}{\kappa}
T_{\mu\nu},
\label{eq:naive_intro}
\end{equation}
seems to become singular when $\kappa\rightarrow0$. This observation naturally raises the question whether the hypersurface $\kappa=0$ represents a genuine singularity of the theory or merely a breakdown of a particular representation of the conservation law.

The purpose of the present work is to investigate this critical regime in a systematic way. We shall show that the apparent singularity of Eq.~(\ref{eq:naive_intro}) is not a singularity of the fundamental equations themselves. Instead, the condition $\kappa(R,T)=0$ defines a critical coupling surface with nontrivial mathematical and physical properties.

More specifically, we demonstrate that the field equations remain regular at $\kappa=0$, derive the compatibility condition that must be satisfied on regular critical hypersurfaces, and discuss the interpretation of these surfaces as gravitational screening boundaries separating attractive and repulsive gravitational phases. A related question concerns whether $\kappa(R,T)$ gravity can be globally reformulated as GR with an effective energy-momentum tensor. As we shall show, the existence of critical coupling hypersurfaces provides a fundamental obstruction to such a reformulation, distinguishing the theory from modified-gravity models, such as Rastall gravity \cite{Rastall1972}, that can be reduced to purely algebraic redefinitions of the matter sector \cite{Visser2018,Golovnev2024}.

The paper is organized as follows. In Sec.~II we introduce the notion of critical coupling surfaces and discuss representative examples. In Sec.~III we establish the regularity of the field equations at $\kappa=0$. Section IV derives the compatibility condition imposed by the conservation law on regular critical hypersurfaces. In Sec.~V we discuss cosmological and astrophysical applications of the critical-surface formalism. Finally, Sec.~VI summarizes our conclusions.

\section{Critical Coupling Surfaces}

The central object of the present work is the subset of spacetime where the effective gravitational coupling vanishes. Let $M$ denote the spacetime manifold and let $\kappa=\kappa(R,T)$
be a smooth scalar function of the Ricci scalar $R$ and the trace
$T=g^{\mu\nu}T_{\mu\nu} $ of the stress-energy tensor.

\begin{definition}
The \emph{critical coupling set} is defined as
\begin{equation}
\Sigma_c=
\left\{
p\in M:
\kappa(R,T)(p)=0
\right\}.
\label{SigmaDef}
\end{equation}
\end{definition}

In other words, $\Sigma_c$ is the set of all spacetime points at which the effective gravitational coupling vanishes. Since $\kappa$ is a smooth scalar field on spacetime, the condition $\kappa(R,T)=0$ defines a level set of $\kappa$.

The geometrical nature of $\Sigma_c$ depends on the behavior of the gradient of $\kappa$. A standard result from differential geometry states that level sets associated with regular values of a smooth function are smooth submanifolds \cite{Lee, Tu}. In the present case, this leads to the following result.

\begin{proposition}
Let $\kappa(R,T)$ be a smooth scalar function on a spacetime manifold $M$. If
\[
\nabla_\mu\kappa\neq0
\]
at every point of the critical set
\[
\Sigma_c=\{p\in M:\kappa(R,T)(p)=0\},
\]
then $\Sigma_c$ is a smooth hypersurface of spacetime.
\end{proposition}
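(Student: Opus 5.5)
The plan is to reduce the statement to the regular value theorem (the preimage theorem) for smooth real-valued functions, cited above via \cite{Lee, Tu}. First I would make the hypotheses precise. The coupling is a smooth function $\kappa:\mathbb{R}^2\to\mathbb{R}$, and the metric $g_{\mu\nu}$ and matter fields $T_{\mu\nu}$ are smooth on $M$. Then $R$ and $T=g^{\mu\nu}T_{\mu\nu}$ are smooth scalar fields on $M$, and the composite
\[
F:=\kappa\circ(R,T):M\to\mathbb{R}
\]
is a smooth function on the four-dimensional manifold $M$. By Eq.~(\ref{SigmaDef}), $\Sigma_c=F^{-1}(0)$.

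Next I would translate the hypothesis into the language of regular values. On a scalar, $\nabla_\mu F=\partial_\mu F$, so at each point $p$ this is the component expression of the differential $dF_p:T_pM\to\mathbb{R}$. The assumption $\nabla_\mu\kappa\neq0$ at every $p\in\Sigma_c$ therefore says $dF_p\neq0$. A nonzero linear functional onto $\mathbb{R}$ is surjective, so $0$ is a regular value of $F$. If $\Sigma_c$ is empty the claim is vacuous, so I would assume it is nonempty.

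Then I would apply the theorem directly. By the regular level set theorem, $F^{-1}(0)$ is an embedded (closed) submanifold of $M$ of codimension one, i.e.\ of dimension $3$. Its tangent space at $p$ is $\ker dF_p=\{v^\mu:\ v^\mu\nabla_\mu\kappa=0\}$, and $\nabla_\mu\kappa$ is a normal covector. This is exactly the statement that $\Sigma_c$ is a smooth hypersurface. For completeness, I could instead sketch the local argument: pick coordinates with $\partial_0F(p)\neq0$ (after relabeling), apply the implicit function theorem to the map $(x^0,\dots,x^3)\mapsto(F,x^1,x^2,x^3)$, and obtain a slice chart in which $\Sigma_c=\{y^0=0\}$.

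The main subtlety is not the topology but the regularity bookkeeping. One needs $F$ to be genuinely smooth (or at least $C^1$) as a function on $M$, which requires the metric to be $C^3$ so that $R$ is $C^1$, and $T_{\mu\nu}$ to be $C^1$. Note also that the hypothesis concerns the gradient of the composite field, not of $\kappa$ as a function of $(R,T)$. Finally, the causal character of the hypersurface (spacelike, timelike, or null according to the sign of $g^{\mu\nu}\nabla_\mu\kappa\nabla_\nu\kappa$) is not asserted and need not be addressed.
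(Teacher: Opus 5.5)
Your proposal is correct and follows the same route as the paper: you treat $\kappa(R,T)$ as a smooth function on $M$, observe that $\nabla_\mu\kappa\neq0$ on $\Sigma_c$ makes $0$ a regular value, and invoke the regular level-set theorem to conclude that $\Sigma_c=\kappa^{-1}(0)$ is a codimension-one embedded submanifold. Your additional bookkeeping makes explicit what the paper's one-line proof leaves implicit, namely viewing $\kappa$ as the composite $\kappa\circ(R,T)$, the slice-chart sketch, and the differentiability requirements on $g_{\mu\nu}$ and $T_{\mu\nu}$.
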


\begin{proof}
The result follows directly from the regular level-set theorem \cite{Lee, Tu}. Since $\nabla_\mu\kappa\neq0$ on $\Sigma_c$, the value $0$ is a regular value of $\kappa$. Consequently, the level set
\[
\Sigma_c=\kappa^{-1}(0),
\]
is a smooth codimension--one submanifold of $M$.
\end{proof}
The condition $\nabla_\mu\kappa\neq0$
guarantees that the critical set possesses a well--defined normal vector
$n_\mu\propto\nabla_\mu\kappa$, allowing a local decomposition of spacetime into the regions
$\kappa>0$ and $\kappa<0$ separated by the hypersurface $\Sigma_c$. A point $p\in\Sigma_c$ satisfying
\begin{equation}
\nabla_\mu\kappa(p)\neq0,
\label{RegularPoint}
\end{equation}
will be called a \emph{regular point} of the critical set. If every point of $\Sigma_c$ satisfies (\ref{RegularPoint}), the critical coupling set will be said to be regular.

In four--dimensional spacetime, a regular critical set is therefore a three--dimensional hypersurface. Its normal direction is naturally determined by the gradient of $\kappa$. Whenever
\[
\nabla_\alpha\kappa\,\nabla^\alpha\kappa\neq0,
\]
we may introduce the normalized normal vector
\begin{equation}
n_\mu=
\frac{\nabla_\mu\kappa}
{\sqrt{\left|
\nabla_\alpha\kappa\,\nabla^\alpha\kappa
\right|}}.
\label{NormalVector}
\end{equation}

The vector $n_\mu$ is orthogonal to $\Sigma_c$ and points in the direction of maximal variation of the effective gravitational coupling. As we shall see in subsequent sections, it plays a fundamental role in the analysis of the conservation laws and in the characterization of transitions across critical coupling surfaces.
\begin{remark}[Degenerate critical points]
Throughout this work we restrict attention to regular critical surfaces,
namely those satisfying
\[
\kappa=0,
\qquad
\nabla_\mu\kappa\neq0.
\]
A different situation arises when both the coupling and its gradient vanish,
\[
\kappa=0,
\qquad
\nabla_\mu\kappa=0.
\]
In this case the regular level-set theorem no longer applies, and the
critical set need not be a smooth hypersurface.\end{remark} 
Such degenerate points may
be associated with extrema, saddle points, bifurcations, or topological
changes of the critical set $\Sigma_c$. Moreover, the compatibility
condition derived later becomes identically satisfied at these points.
Although the field equations remain regular, the local geometry of the
critical set can differ substantially from that of the regular critical
surfaces considered in the present work. The analysis of degenerate
critical points is left for future investigation.
\subsection{Representative examples}

The existence of critical coupling surfaces is not exceptional. On the contrary, it is a generic feature of many admissible choices of $\kappa(R,T)$. We now analyze some particular cases used in the literature for the running gravitational constant.

\subsubsection{Linear dependence on the trace}

Consider the simple coupling function

\begin{equation}
\kappa(T)=8\pi G-\lambda T,
\label{kappaT}
\end{equation}

where $G$ is Newton's gravitational constant and $\lambda >0$ is a coupling parameter.

The critical condition $\kappa(T)=0$, immediately yields

\begin{equation}
T=T_c=\frac{8\pi G}{\lambda}.
\label{Tc}
\end{equation}

Thus, the critical set is the level hypersurface defined by the constant value $T_c$ of the trace.

The gradient of the coupling function is

\begin{equation}
\nabla_\mu\kappa
=
-\lambda\nabla_\mu T,
\end{equation}

so that the regularity condition $\nabla_\mu\kappa\neq0 $ is equivalent to $\nabla_\mu T\neq0$. Consequently, the normal vector to the critical hypersurface may be written as

\begin{equation}
n_\mu=
\frac{\nabla_\mu T}
{\sqrt{\left|\nabla_\alpha T\nabla^\alpha T\right|}}.
\label{NormalT}
\end{equation}

Adopting the metric signature as $(+---)$, the energy--momentum tensor of a perfect fluid can be expressed as $T_{\mu\nu}
=
(\rho+p)u_\mu u_\nu
-pg_{\mu\nu}$, from which the trace $T$ follows directly: 

\begin{equation}
T=\rho-3p.
\label{PerfectFluidTrace}
\end{equation}

Therefore, the critical condition becomes

\begin{equation}
\rho-3p
=
\frac{8\pi G}{\lambda}.
\label{CriticalPerfectFluid}
\end{equation}

The critical hypersurface is thus determined by a specific balance between energy density and pressure. A particularly important case is that of a barotropic equation of state satisfying

\begin{equation}
p=w\rho,
\label{BarotropicEOS}
\end{equation}

where $w$ is a constant parameter. Substituting into (\ref{PerfectFluidTrace}) gives

\begin{equation}
T=(1-3w)\rho.
\end{equation}

Hence the critical condition (\ref{Tc}) becomes

\begin{equation}
\rho_c
=
\frac{8\pi G}
{\lambda(1-3w)},
\label{CriticalDensity}
\end{equation}

provided $w\neq 1/3$.

Several particular cases are worth mentioning. For dust ($w=0$), the critical density is $\rho_c=8\pi G/\lambda$, while for vacuum energy ($w=-1$) it becomes $\rho_c=2\pi G/\lambda$. By contrast, radiation ($w=1/3$) satisfies $T=0$ identically, preventing the formation of a critical coupling surface within the linear model (\ref{kappaT}). Thus, the existence of critical surfaces depends sensitively on the equation of state of the matter source.

More generally, the critical value $T=T_c$ partitions spacetime into regions where $\kappa$ is positive, negative, or vanishes. The hypersurface $T=T_c$ therefore constitutes a natural candidate for a phase boundary separating attractive and repulsive gravitational sectors.

\begin{figure}[h]
\centering
\includegraphics[width=0.65\textwidth]{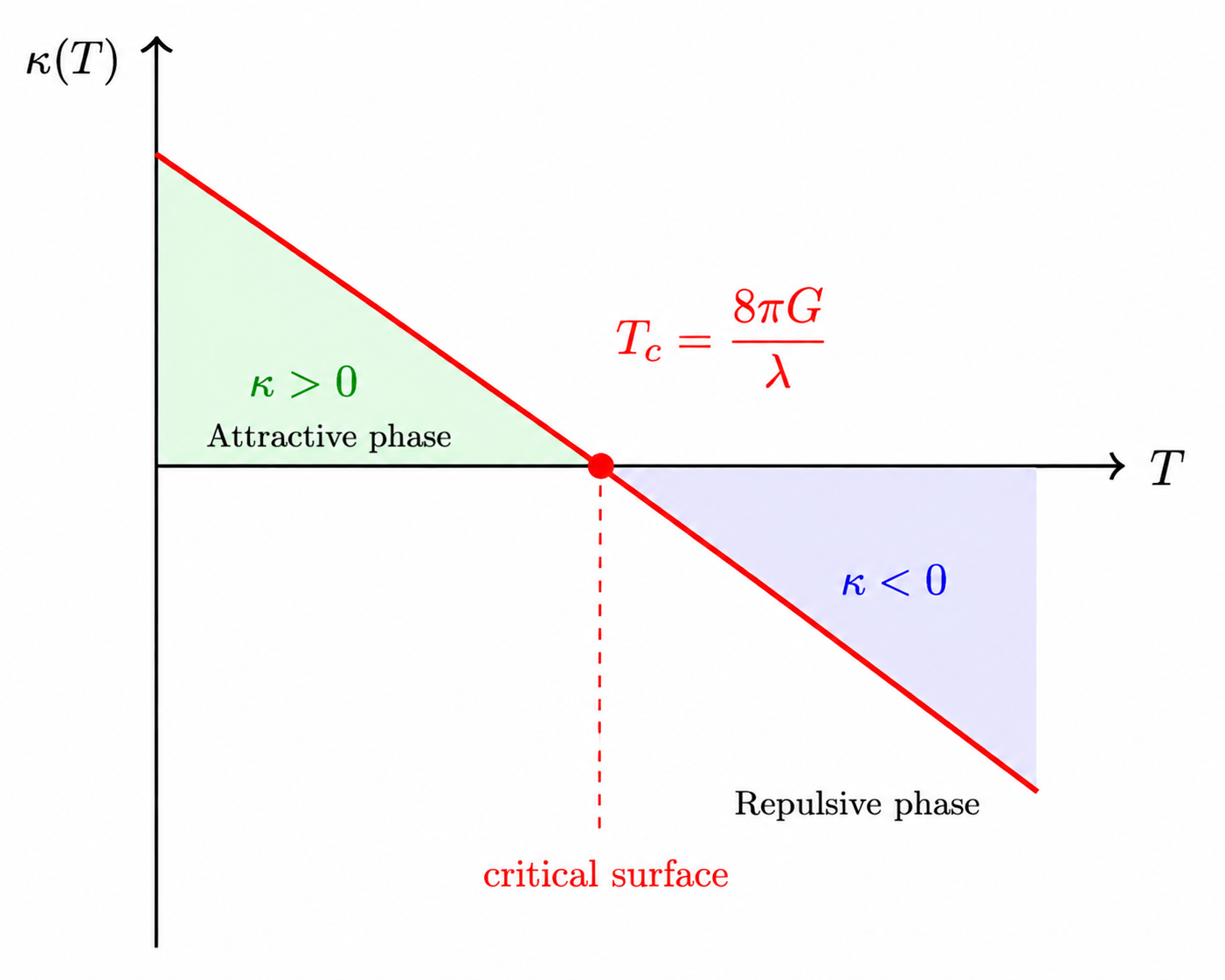}
\caption{Critical coupling surface for the linear model
$\kappa(T)=8\pi G-\lambda T$.
The hypersurface $T=T_c$ separates the attractive sector
($\kappa>0$) from the repulsive sector ($\kappa<0$).
}
\label{FigLinearTrace}
\end{figure}
\subsubsection{Linear dependence on the Ricci scalar}

As a second example, consider the coupling
\begin{equation}
\kappa(R)=8\pi G-\beta R,
\label{kappaR}
\end{equation}
where $\beta>0$ is a constant parameter. The critical condition $\kappa(R)=0$ yields
\begin{equation}
R=R_c=\frac{8\pi G}{\beta},
\label{Rc}
\end{equation}
so that the critical set is the hypersurface of constant scalar curvature $R_c$.

Since $\nabla_\mu\kappa=-\beta\nabla_\mu R$, the regularity condition $\nabla_\mu\kappa\neq0$ is equivalent to $\nabla_\mu R\neq0$, and the normal vector to the critical hypersurface takes the form
\begin{equation}
n_\mu=
\frac{\nabla_\mu R}
{\sqrt{\left|\nabla_\alpha R\nabla^\alpha R\right|}}.
\label{NormalR}
\end{equation}

Unlike the trace-driven model, the location of the critical surface is determined entirely by the geometry of spacetime. In particular, the critical regime is reached whenever the scalar curvature attains the value (\ref{Rc}), independently of the detailed properties of the matter source.

The choice $\beta>0$ is especially appealing from a physical perspective, since the effective gravitational coupling decreases as the curvature increases. The critical value $R_c$ therefore defines a curvature scale beyond which the coupling vanishes and subsequently changes sign. In this sense, regions of sufficiently large curvature may naturally enter a repulsive gravitational phase.

For Einstein spaces satisfying $R=4\Lambda$, the critical condition reduces to
\begin{equation}
\Lambda=\frac{2\pi G}{\beta},
\label{CriticalLambda}
\end{equation}
showing that the existence of a critical coupling surface is compatible with a positive cosmological constant. Moreover, the observed smallness of $\Lambda$ suggests that the corresponding critical curvature scale is reached only in highly curved regimes.

As before, the hypersurface $R=R_c$ separates regions with $\kappa>0$ and $\kappa<0$, and may therefore be interpreted as a transition surface between attractive and repulsive gravitational phases. In contrast to the previous example, the transition is now governed purely by the geometry of spacetime rather than by the matter content.

\subsubsection{Mixed curvature-matter coupling}

A particularly interesting example is obtained by allowing the effective gravitational coupling to depend simultaneously on curvature and matter through

\begin{equation}
\kappa(R,T)
=
8\pi G-\gamma RT,
\label{kappaRT}
\end{equation}

where $\gamma>0$ is a coupling parameter. The critical condition $\kappa(R,T)=0$ becomes

\begin{equation}
RT
=
\frac{8\pi G}{\gamma}.
\label{MixedCritical}
\end{equation}

Unlike the previous examples, the critical set is no longer determined by a constant value of either $R$ or $T$ separately, but by a nontrivial relation between geometry and matter. The critical coupling surface therefore emerges from the interplay between the curvature of spacetime and the local properties of the matter distribution.

The gradient of the coupling function is

\begin{equation}
\nabla_\mu\kappa
=
-\gamma
\left(
T\nabla_\mu R
+
R\nabla_\mu T
\right),
\label{GradientMixed}
\end{equation}

so that the regularity condition $\nabla_\mu\kappa\neq0$ requires that the combination
$T\nabla_\mu R+R\nabla_\mu T$ does not vanish identically on the critical surface. Whenever this condition is satisfied, the normal vector to the critical hypersurface is given by

\begin{equation}
n_\mu
=
\frac{
T\nabla_\mu R+R\nabla_\mu T
}
{
\sqrt{
\left|
(T\nabla_\alpha R+R\nabla_\alpha T)
(T\nabla^\alpha R+R\nabla^\alpha T)
\right|
}
}.
\label{MixedNormal}
\end{equation}

This expression illustrates a key difference with respect to the previous examples. The geometry of the critical surface is now controlled simultaneously by curvature gradients and matter gradients. Consequently, the transition between gravitational phases is driven neither by geometry alone nor by matter alone, but by their mutual interaction.

For positive $\gamma$, the critical condition (\ref{MixedCritical}) defines a hyperbola in the $(R,T)$ plane. The corresponding critical curve separates regions with positive and negative effective coupling, as shown schematically in Fig.~\ref{FigMixedCritical}.

\begin{figure}[h]
\centering
\includegraphics[width=0.65\textwidth]{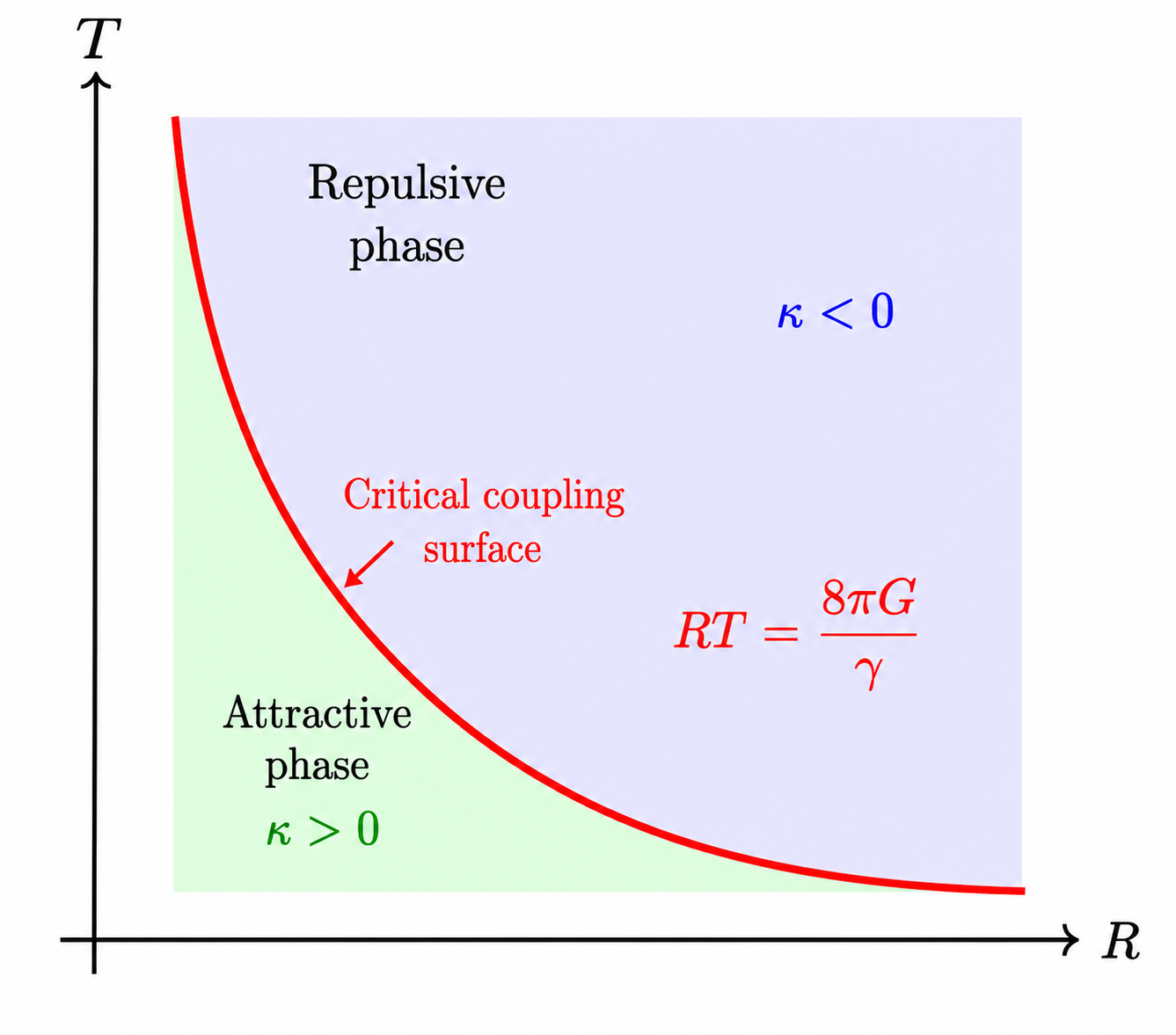}
\caption{Phase structure of the mixed coupling model
$\kappa(R,T)=8\pi G-\gamma RT$.
The critical curve $RT=8\pi G/\gamma$ separates the attractive sector
($\kappa>0$) from the repulsive sector ($\kappa<0$).
Unlike the purely curvature-driven or matter-driven models,
the transition is controlled by the combined effect of geometry and matter.
}
\label{FigMixedCritical}
\end{figure}

The mixed model possesses a richer phase structure than the purely trace-dependent or curvature-dependent cases. In particular, a transition between attractive and repulsive gravitational sectors may occur through variations in either the matter content, the spacetime curvature, or both simultaneously. This feature makes the coupling (\ref{kappaRT}) especially attractive for the study of high-density and high-curvature regimes, where geometry and matter are expected to interact strongly.

\subsection{Phase structure}

The existence of the critical set $\Sigma_c$ induces a natural decomposition of spacetime according to the sign of the effective gravitational coupling. We define
\begin{equation}
\begin{aligned}
\mathcal{M}_+ &=
\{p\in M:\kappa(R,T)(p)>0\},
\\
\Sigma_c &=
\{p\in M:\kappa(R,T)(p)=0\},
\\
\mathcal{M}_- &=
\{p\in M:\kappa(R,T)(p)<0\}.
\end{aligned}
\label{PhaseDecomposition}
\end{equation}

Thus, away from the critical surface, the spacetime is split into two open sectors, $\mathcal{M}_+$ and $\mathcal{M}_-$, separated by $\Sigma_c$ whenever the critical set is regular.

The field equations,

\begin{equation}
G_{\mu\nu}-\Lambda g_{\mu\nu}
=
\kappa(R,T)T_{\mu\nu},
\label{FieldAgain}
\end{equation}

show that the sign of $\kappa$ determines the sign of the effective coupling between matter and curvature. In the region $\mathcal{M}_+$, ordinary matter contributes to the curvature with the same sign as in General Relativity. By contrast, in $\mathcal{M}_-$ the matter contribution enters the field equations with the opposite sign, suggesting an effective repulsive gravitational sector.

The critical hypersurface $\Sigma_c$ therefore plays the role of a transition surface between attractive and repulsive gravitational phases. At $\Sigma_c$ itself, the matter contribution to the field equations is screened, and the geometry satisfies the effective vacuum equations

\begin{equation}
G_{\mu\nu}-\Lambda g_{\mu\nu}=0.
\end{equation}

This interpretation should be understood in a local and effective sense. The sign of $\kappa$ controls the sign of the matter source term in the field equations, but the dynamical admissibility of a transition across $\Sigma_c$ depends on the conservation law and on the regularity conditions derived below.

The mathematical consistency of this phase interpretation, as well as the behavior of the conservation laws near $\Sigma_c$, will be investigated in the following sections.
\section{Regularity of the Field Equations}

The first issue to address is whether the condition $\kappa(R,T)=0$ represents a genuine singularity of the theory. At first sight, this may seem plausible because the commonly used non-conservation equation contains the ratio $\nabla_\mu\kappa/\kappa$. However, the field equations themselves contain only the product $\kappa T_{\mu\nu}$. Therefore, the vanishing of $\kappa$ does not automatically produce a divergence.

In this section we show that, under standard smoothness assumptions, the critical set $\Sigma_c$ is not a singular set of the fundamental field equations. The result is local and should be understood as a statement about the regularity of the equations, not as a statement about the global existence or dynamical stability of solutions.

\begin{theorem}[Regularity at the critical coupling set]
Let $(M,g_{\mu\nu})$ be a smooth spacetime, let $T_{\mu\nu}$ be a smooth stress-energy tensor, and let $\kappa=\kappa(R,T)$ be a smooth scalar function. Consider the field equations
\begin{equation}
G_{\mu\nu}-\Lambda g_{\mu\nu}
=
\kappa(R,T)T_{\mu\nu}.
\label{RegularFieldEq}
\end{equation}
Let
\begin{equation}
\Sigma_c=\{p\in M:\kappa(R,T)(p)=0\}
\end{equation}
be the critical coupling set. If $g_{\mu\nu}$, $T_{\mu\nu}$, and $\kappa(R,T)$ are smooth at $p\in\Sigma_c$, then Eq.~(\ref{RegularFieldEq}) is algebraically regular at $p$. In particular, the right-hand side $\kappa T_{\mu\nu}$ is finite and smooth at $p$, and no divergence is generated solely by the condition $\kappa(p)=0$.
\end{theorem}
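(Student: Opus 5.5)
The plan is to argue directly from the structure of Eq.~(\ref{RegularFieldEq}), checking that each side is built from smooth objects by operations that never involve division by $\kappa$. First, the left-hand side: since $g_{\mu\nu}$ is smooth at $p$, its inverse $g^{\mu\nu}$ is smooth in a neighbourhood of $p$ (the metric is nondegenerate, and the entries of the inverse are rational in $g_{\mu\nu}$ with denominator $\det g\neq0$). The Christoffel symbols, the Riemann and Ricci tensors, the scalar $R$, and hence $G_{\mu\nu}-\Lambda g_{\mu\nu}$ are then polynomial expressions in $g^{\mu\nu}$ and derivatives of $g_{\mu\nu}$ up to second order. They are therefore smooth at $p$, and none of them refers to $\kappa$ at all.

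Second, the right-hand side. The trace $T=g^{\mu\nu}T_{\mu\nu}$ is a contraction of smooth tensors, so it is smooth. Together with the smoothness of $R$, this makes the composite $p\mapsto\kappa(R(p),T(p))$ smooth as a spacetime scalar. Here I use either the hypothesis that $\kappa$ is smooth at $p$ directly, or the fact that $\kappa$ is a smooth function of two variables composed with smooth maps. The product $\kappa T_{\mu\nu}$ of a smooth scalar and a smooth tensor is smooth, and in particular finite, at $p$. Setting $\kappa(p)=0$ simply gives $\kappa T_{\mu\nu}|_p=0$, so at $p$ the equation reduces to $G_{\mu\nu}-\Lambda g_{\mu\nu}=0$, which is a perfectly regular tensor equation.

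Third, I would make precise what ``algebraically regular'' means. The equation is a relation among smooth tensor fields that involves only sums, products and contractions, with no inverse powers of $\kappa$. Equivalently, the map $(g,\partial g,\partial^2 g,T_{\mu\nu})\mapsto G_{\mu\nu}-\Lambda g_{\mu\nu}-\kappa T_{\mu\nu}$ is smooth in a neighbourhood of the data at $p$, including on $\Sigma_c$. I would also take the divergence: by the contracted Bianchi identity, $\nabla^\mu(\kappa T_{\mu\nu})=0$ follows, and expanded this reads $\kappa\nabla^\mu T_{\mu\nu}+(\nabla^\mu\kappa)T_{\mu\nu}=0$, which involves only smooth quantities. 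The singular form (\ref{eq:naive_intro}) arises only from dividing this by $\kappa$, which is legitimate only on $\mathcal M_\pm$. The apparent divergence is therefore an artifact of that representation and not of the fundamental equations.

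The main obstacle is not technical but one of scope: the theorem must not be read as an existence or well-posedness claim. I would flag explicitly that regularity here is a statement about the equations evaluated on given smooth data. Whether smooth solutions actually cross $\Sigma_c$ is a separate question, constrained by the conservation law. In particular, the divided form of that law loses information exactly where $\kappa=0$, and this constraint is deferred to the compatibility analysis.
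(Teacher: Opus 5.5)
Your proposal is correct and follows essentially the same route as the paper. The right-hand side $\kappa T_{\mu\nu}$ is a product of a smooth scalar and a smooth tensor, so it is finite and vanishes at $p$, while the left-hand side is built from the metric and its derivatives with no reference to $\kappa$, so no division by $\kappa$ ever occurs. Your extra details (smoothness of $g^{\mu\nu}$, and of the composite $\kappa(R(p),T(p))$) sharpen that argument. Your remarks on $\nabla^\mu(\kappa T_{\mu\nu})=0$ and on the limited scope of the claim correspond to the Proposition and Remark the paper states separately right after this theorem.
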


\begin{proof}
Since $\kappa(R,T)$ is a smooth scalar field and $T_{\mu\nu}$ is a smooth tensor field, their product $\kappa T_{\mu\nu}$ is a smooth tensor field. Therefore, at any point $p\in\Sigma_c$, where $\kappa(p)=0$, one simply has
\[
(\kappa T_{\mu\nu})(p)=0,
\]
provided $T_{\mu\nu}(p)$ is finite. Thus the matter source term in Eq.~(\ref{RegularFieldEq}) vanishes at $p$ but does not diverge.

The left-hand side of Eq.~(\ref{RegularFieldEq}) is constructed from the metric and its derivatives through the Einstein tensor $G_{\mu\nu}$, together with the smooth tensor $\Lambda g_{\mu\nu}$. Under the assumed smoothness of $g_{\mu\nu}$, this side is also finite wherever the curvature is finite. Hence the field equations remain algebraically well defined at $\kappa=0$.

Consequently, the vanishing of $\kappa$ alone is not a singularity of the field equations. Any singular behavior at $p$ would have to arise from a divergence of the metric, curvature, matter tensor, or their derivatives, not from the factor $\kappa(p)=0$ itself.
\end{proof}

The same conclusion holds for the fundamental conservation law. Taking the covariant divergence of Eq.~(\ref{RegularFieldEq}) and using the contracted Bianchi identity yields
\begin{equation}
\nabla^\mu(\kappa T_{\mu\nu})=0.
\label{FundamentalConservationRegularity}
\end{equation}
This equation contains no division by $\kappa$ and is therefore well defined at $\kappa=0$ whenever $\kappa$ and $T_{\mu\nu}$ are smooth.

\begin{proposition}[Regularity of the fundamental conservation law]
Under the hypotheses of the previous theorem, the conservation equation
\[
\nabla^\mu(\kappa T_{\mu\nu})=0
\]
is regular at every point of $\Sigma_c$.
\end{proposition}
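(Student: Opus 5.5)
The plan is to argue directly from smoothness, in the same spirit as the proof of the preceding theorem, and to expand the divergence by the Leibniz rule so that every term is visibly finite at $p\in\Sigma_c$. First I will note that under the hypotheses of the theorem, $g_{\mu\nu}$ is smooth at $p$. Hence the Christoffel symbols, $R$, and the covariant derivative $\nabla_\mu$ are well defined and smooth in a neighbourhood of $p$. Since $T_{\mu\nu}$ is smooth, the trace $T=g^{\mu\nu}T_{\mu\nu}$ is smooth. By the chain rule, $\kappa(R,T)$ is then a smooth scalar field near $p$, with
\[
\nabla_\mu\kappa=\frac{\partial\kappa}{\partial R}\nabla_\mu R+\frac{\partial\kappa}{\partial T}\nabla_\mu T ,
\]
which is finite at $p$.

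Next I will write the identity
\[
\nabla^\mu(\kappa T_{\mu\nu})=(\nabla^\mu\kappa)\,T_{\mu\nu}+\kappa\,\nabla^\mu T_{\mu\nu},
\]
valid for smooth fields. Each factor on the right is smooth, so the left-hand side is a smooth covector field involving no division by $\kappa$. At $p\in\Sigma_c$ the second term vanishes because $\kappa(p)=0$, and the equation reduces to the finite algebraic relation $(\nabla^\mu\kappa)T_{\mu\nu}=0$ at $p$. This makes explicit that the conservation law is regular there; indeed, it becomes a constraint rather than a divergence. I will also remark that the equation itself follows from the divergence of the field equations (\ref{RegularFieldEq}) via the contracted Bianchi identity $\nabla^\mu G_{\mu\nu}=0$ and $\nabla^\mu g_{\mu\nu}=0$, both of which hold for smooth metrics. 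Consequently, every solution of the field equations that is smooth near $p$ satisfies the conservation law as a smooth identity there.

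Finally, I will contrast this with Eq.~(\ref{eq:naive_intro}). That equation is obtained only by dividing the Leibniz expansion by $\kappa$, a step that is legitimate only on $\mathcal M_+\cup\mathcal M_-$. So the apparent pole is an artifact of the division, not of the fundamental equation. The only potential subtlety, and the step I expect to need the most care, is ensuring that $\nabla_\mu\kappa$ is finite. This requires derivatives of $R$, and hence third derivatives of the metric, together with first derivatives of $T_{\mu\nu}$. Smoothness ($C^\infty$, or at least $C^3$ metric and $C^1$ matter fields) of the assumed data guarantees this, so I will state explicitly that this is where the hypothesis is used.
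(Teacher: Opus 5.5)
Your proposal is correct and follows the paper's proof essentially step for step. You expand $\nabla^\mu(\kappa T_{\mu\nu})=(\nabla^\mu\kappa)T_{\mu\nu}+\kappa\nabla^\mu T_{\mu\nu}$ with no division by $\kappa$, note that the second term vanishes at $p\in\Sigma_c$, and that the first term stays finite by smoothness. Your explicit chain-rule formula for $\nabla_\mu\kappa$ and your remark about third metric derivatives and first derivatives of $T_{\mu\nu}$ just make precise what the paper's ``provided $\nabla^\mu\kappa$ and $T_{\mu\nu}$ are finite'' leaves implicit.
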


\begin{proof}
Expanding the covariant derivative gives
\begin{equation}
\nabla^\mu(\kappa T_{\mu\nu})
=
(\nabla^\mu\kappa)T_{\mu\nu}
+
\kappa\nabla^\mu T_{\mu\nu}.
\label{ExpandedRegularity}
\end{equation}
If $\kappa$ and $T_{\mu\nu}$ are smooth, then both $(\nabla^\mu\kappa)T_{\mu\nu}$ and $\kappa\nabla^\mu T_{\mu\nu}$ are finite wherever the fields are smooth. At a point $p\in\Sigma_c$, the second term vanishes because $\kappa(p)=0$, while the first term remains finite provided $\nabla^\mu\kappa$ and $T_{\mu\nu}$ are finite. Hence Eq.~(\ref{FundamentalConservationRegularity}) is well defined at the critical set.
\end{proof}

The apparent singularity appears only after rewriting Eq.~(\ref{ExpandedRegularity}) in the form
\begin{equation}
\nabla^\mu T_{\mu\nu}
=
-\frac{\nabla^\mu\kappa}{\kappa}T_{\mu\nu}.
\label{SingularForm}
\end{equation}
This equation is obtained by dividing by $\kappa$ and is therefore valid only on the open set $M\setminus\Sigma_c$, where $\kappa\neq0$. It cannot be used as a fundamental equation on $\Sigma_c$ itself.

\begin{remark}
The results above do not imply that every solution crossing $\Sigma_c$ is dynamically regular. Curvature singularities, discontinuities, shocks, or distributional matter layers may still occur in particular solutions. The theorem only establishes that the condition $\kappa(R,T)=0$ is not by itself a structural singularity of the fundamental equations.
\end{remark}
The regularity of the field equations at $\kappa=0$ raises a natural question regarding the physical interpretation of the theory. Since the field equations can be formally rewritten as
\begin{equation}
G_{\mu\nu}-\Lambda g_{\mu\nu}
=
\widetilde T_{\mu\nu},
\qquad
\widetilde T_{\mu\nu}
=
\kappa(R,T)T_{\mu\nu},
\end{equation}
one may wonder whether $\kappa(R,T)$ gravity is merely a reformulation of General Relativity with an effective energy-momentum tensor, in a way analogous to the equivalence between Rastall gravity and Einstein gravity discussed by Visser and more recently emphasized by Golovnev \cite{Visser2018,Golovnev2024}. The following lemma shows that such an interpretation is, at best, valid only away from the critical set. The existence of critical coupling surfaces introduces a fundamental obstruction to a global Einstein--frame description.
\begin{lemma}[Breakdown of the effective Einstein--frame description]
Consider the theory
\begin{equation}
G_{\mu\nu}-\Lambda g_{\mu\nu}
=
\kappa(T)T_{\mu\nu},
\end{equation}
and define the effective tensor
\begin{equation}
\widetilde T_{\mu\nu}
=
\kappa(T)T_{\mu\nu}.
\end{equation}
On any open region where $\kappa(T)\neq0$, the field equations may be
rewritten in Einstein form
\begin{equation}
G_{\mu\nu}-\Lambda g_{\mu\nu}
=
\widetilde T_{\mu\nu}.
\end{equation}
However, this reformulation becomes singular on the critical set
\[
\Sigma_c=\{p\in M:\kappa(T)(p)=0\}.
\]
\end{lemma}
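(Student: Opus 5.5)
The first part of the statement is immediate. On an open region $U$ where $\kappa(T)\neq0$, the definition $\widetilde T_{\mu\nu}=\kappa(T)T_{\mu\nu}$ turns the field equations literally into $G_{\mu\nu}-\Lambda g_{\mu\nu}=\widetilde T_{\mu\nu}$. The contracted Bianchi identity then gives $\nabla^\mu\widetilde T_{\mu\nu}=0$, so in $U$ the theory looks like GR with an ordinary conserved source.

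The substance of the lemma is what ``becomes singular'' means. I would make it precise as the failure of the map $\Phi:T_{\mu\nu}\mapsto\widetilde T_{\mu\nu}=\kappa(T)T_{\mu\nu}$ to be invertible at $\Sigma_c$. An Einstein-frame description in the Rastall sense requires recovering the physical matter from the effective source, i.e. $T_{\mu\nu}=\Phi^{-1}(\widetilde T_{\mu\nu})$.

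The plan is as follows.

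First, reduce the inversion to a scalar equation by taking the trace. This gives $\widetilde T=\kappa(T)\,T\equiv F(T)$. Once $T$ is known, one recovers $T_{\mu\nu}=\widetilde T_{\mu\nu}/\kappa(T)$.

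Second, analyse where this inversion breaks down, along two routes.

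\emph{Route (a): the tensorial division.} The inverse formula $T_{\mu\nu}=\widetilde T_{\mu\nu}/\kappa(T)$ is undefined exactly on $\Sigma_c$. There $\widetilde T_{\mu\nu}=0$ for every finite $T_{\mu\nu}$, so the effective tensor carries no information about the matter. All stress-energy tensors with $T=T_c$ are mapped to the same $\widetilde T_{\mu\nu}=0$. Hence $\Phi$ is not injective on the fibre over $\Sigma_c$, and no continuous inverse exists there.

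\emph{Route (b): the Jacobian of $\Phi$.} Compute the differential
$\delta\widetilde T_{\mu\nu}=\kappa\,\delta T_{\mu\nu}+\kappa'(T)\,T_{\mu\nu}\,g^{\alpha\beta}\delta T_{\alpha\beta}$.
At $\kappa=0$ this reduces to the rank-one map $\delta T_{\mu\nu}\mapsto\kappa'(T_c)\,T_{\mu\nu}\,\delta T$. Its kernel contains every traceless perturbation, so $\Phi$ is degenerate and the inverse function theorem fails. For the linear model (\ref{kappaT}), with $\kappa'=-\lambda\neq0$, the critical set is regular, so this degeneracy genuinely occurs on a hypersurface.

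Third, show that the degeneracy also appears in the conservation law. One has $\nabla^\mu T_{\mu\nu}=\nabla^\mu(\widetilde T_{\mu\nu}/\kappa)$, which reproduces the singular form (\ref{SingularForm}). So the matter dynamics cannot be expressed in terms of $\widetilde T_{\mu\nu}$ across $\Sigma_c$. This contrasts with Rastall gravity, where the corresponding map $T_{\mu\nu}\mapsto T_{\mu\nu}+cTg_{\mu\nu}$ is linear with constant coefficients and globally invertible for generic $c$.

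The main obstacle is making ``singular'' rigorous rather than rhetorical. The Einstein equations themselves remain regular at $\Sigma_c$, by the Theorem above. What fails is the identification of $\widetilde T_{\mu\nu}$ with a faithful matter description. I would therefore phrase the conclusion as the non-invertibility of $\Phi$, with the rank drop of its differential at $\kappa=0$ as the central computation, rather than as any divergence of the field equations.
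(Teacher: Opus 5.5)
Your proposal is correct, and its core, Route (a), is exactly the paper's proof. The paper only observes that on $\Sigma_c$ one has $\widetilde T_{\mu\nu}=\kappa(T)T_{\mu\nu}=0$, so infinitely many distinct matter configurations map to the same effective source and the Einstein-form description cannot be inverted there. Your ``all tensors with $T=T_c$'' is in fact more precise than the paper's ``every finite tensor''.

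Route (b) is a genuine addition that makes ``singular'' quantitative, and it is worth pushing one step further. By the matrix-determinant lemma, your differential $\delta T_{\mu\nu}\mapsto\kappa\,\delta T_{\mu\nu}+\kappa'(T)\,T_{\mu\nu}\,g^{\alpha\beta}\delta T_{\alpha\beta}$, acting on the ten-dimensional space of symmetric tensors, has determinant $\kappa^{9}(\kappa+\kappa'T)$. It therefore degenerates in two places:
\begin{itemize}
\item on $\Sigma_c$, where the rank collapses from $10$ to $1$ and only $\delta T$ survives;
\item off $\Sigma_c$, on the locus $\kappa+\kappa'T=d(\kappa T)/dT=0$, where the kernel is one-dimensional and spanned by $T_{\mu\nu}$.
\end{itemize}
For the linear model this second locus is $T=4\pi G/\lambda$. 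That is the vertex of the trace map $\widetilde T=8\pi G\,T-\lambda T^2$, which is exactly the two-to-one behaviour the paper discusses in the Remark after the lemma. Your step ``once $T$ is known'' silently passes over it. So your Jacobian route unifies the lemma with that Remark and shows that $\Sigma_c$ is by far the more severe degeneracy. The paper's one-line argument only gives non-injectivity.

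One side claim should be dropped. The fact that $\kappa'=-\lambda\neq0$ does not make $\Sigma_c$ a regular hypersurface. Regularity in the paper's sense means $\nabla_\mu\kappa=-\lambda\nabla_\mu T\neq0$, which is a property of the solution. Indeed, the paper's own FLRW theorem forces $\dot T=0$ at $\rho_c$. Nothing is lost by dropping it, since your degeneracy is pointwise and does not need $\Sigma_c$ to be a hypersurface.
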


\begin{proof}
If $\kappa(T)\neq0$, the field equations can be rewritten algebraically as
\[
G_{\mu\nu}-\Lambda g_{\mu\nu}
=
\widetilde T_{\mu\nu}.
\]
On $\Sigma_c$, however,
\[
\widetilde T_{\mu\nu}
=
\kappa(T)T_{\mu\nu}
=
0
\]
for every finite tensor $T_{\mu\nu}$. Therefore, infinitely many distinct
matter configurations are mapped to the same effective source, and the
Einstein-form description becomes singular.
\end{proof}

\begin{remark}
The singular behavior at $\Sigma_c$ is unavoidable and independent of the
specific form of the coupling function. On the critical surface, the
effective tensor vanishes identically and the correspondence between
physical and effective sources collapses. Furthermore, for trace-dependent couplings such as
\[
\kappa(T)=8\pi G-\lambda T,
\]
the effective redefinition
\[
\widetilde T_{\mu\nu}
=
\kappa(T)T_{\mu\nu}
\]
may fail to be globally invertible even away from the critical surface.
Taking the trace gives
\[
\widetilde T
=
\kappa(T)T
=
(8\pi G-\lambda T)T
=
8\pi G\,T-\lambda T^2.
\]
Thus the effective trace is related to the physical trace through a
quadratic map. For a given value of $\widetilde T$, the equation
\[
\lambda T^2-8\pi G\,T+\widetilde T=0
\]
may admit two distinct solutions,
\[
T_1\neq T_2,
\]
satisfying
\[
\widetilde T(T_1)=\widetilde T(T_2).
\]

Therefore, the correspondence between physical and effective matter
sources is not necessarily one-to-one. Even when $\kappa\neq0$, the
Einstein-frame representation may only be locally invertible. 

This situation differs fundamentally from Rastall gravity, where a global
algebraic redefinition of the matter sector can be used to establish an
equivalence with General Relativity \cite{Visser2018,Golovnev2024}. In
the present theory, both the singular behavior at $\Sigma_c$ and the
possible non-bijectivity of the trace map obstruct such a global
identification. The critical surface therefore represents the strongest
manifestation of a more general loss of global equivalence between
$\kappa(T)$ gravity and an Einstein-frame description.
\end{remark}

Although the previous lemma was formulated for the subclass
$\kappa=\kappa(T)$, the same obstruction persists for the general theory
$\kappa=\kappa(R,T)$. Indeed, the effective tensor
\[
\widetilde T_{\mu\nu}
=
\kappa(R,T)T_{\mu\nu}
\]
becomes degenerate on the critical set
\[
\Sigma_c=\{\kappa(R,T)=0\},
\]
where the correspondence between physical and effective sources loses
invertibility. Moreover, because $\widetilde T_{\mu\nu}$ depends
explicitly on the curvature scalar $R$, the effective source generally
contains both matter and geometric contributions. Consequently, the
Einstein-frame interpretation is only local and does not define a
globally equivalent reformulation of the theory.

\section{Conservation Laws at the Critical Surface}

Having established the regularity of the field equations at $\Sigma_c$, we now investigate the constraints imposed by the conservation law on solutions that approach or cross the critical coupling surface.

The fundamental conservation equation follows from the contracted Bianchi identity and the field equations,
\begin{equation}
\nabla^\mu(\kappa T_{\mu\nu})=0.
\label{FundamentalConservation}
\end{equation}

Expanding the covariant derivative yields
\begin{equation}
(\nabla^\mu\kappa)T_{\mu\nu}
+
\kappa\nabla^\mu T_{\mu\nu}
=
0.
\label{ExpandedConservation}
\end{equation}

Away from the critical set, where $\kappa\neq0$, Eq.~(\ref{ExpandedConservation}) may be rewritten as
\begin{equation}
\nabla^\mu T_{\mu\nu}
=
-\frac{\nabla^\mu\kappa}{\kappa}T_{\mu\nu}.
\label{NonConservationLaw}
\end{equation}

However, as discussed in the previous section, Eq.~(\ref{NonConservationLaw}) ceases to be valid on $\Sigma_c$ because it is obtained by dividing by $\kappa$. The fundamental equation remains (\ref{ExpandedConservation}).

Let $p\in\Sigma_c$. Evaluating Eq.~(\ref{ExpandedConservation}) at $p$ and using $\kappa(p)=0$ immediately gives
\begin{equation}
(\nabla^\mu\kappa)T_{\mu\nu}=0.
\label{CriticalCondition}
\end{equation}

This relation represents the first nontrivial constraint imposed by the conservation law at the critical surface.

\begin{proposition}[Critical compatibility condition]
Let $\Sigma_c$ be a regular critical hypersurface, i.e. $\kappa=0$ and $\nabla_\mu\kappa\neq0$ on $\Sigma_c$. Then every smooth solution of the field equations satisfies
\begin{equation}
(\nabla^\mu\kappa)T_{\mu\nu}=0
\end{equation}
on $\Sigma_c$.
\end{proposition}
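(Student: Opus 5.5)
The plan is to derive the condition directly from the fundamental conservation law, never passing through the divided form (\ref{NonConservationLaw}). First, for a smooth solution of (\ref{eq:field}), take the covariant divergence of both sides. The contracted Bianchi identity gives $\nabla^\mu G_{\mu\nu}=0$, and metric compatibility gives $\nabla^\mu(\Lambda g_{\mu\nu})=0$ for constant $\Lambda$. Together these yield (\ref{FundamentalConservation}) at every point of $M$, including points of $\Sigma_c$. By the Proposition on the regularity of the fundamental conservation law, this equation is well defined at $\Sigma_c$. Smoothness of $\kappa$ and $T_{\mu\nu}$ then justifies the Leibniz expansion (\ref{ExpandedConservation}) pointwise.

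Second, fix $p\in\Sigma_c$ and evaluate (\ref{ExpandedConservation}) at $p$. Since $T_{\mu\nu}$ is smooth, $\nabla^\mu T_{\mu\nu}(p)$ is finite, so the term $\kappa(p)\nabla^\mu T_{\mu\nu}(p)$ vanishes because $\kappa(p)=0$. What remains is $(\nabla^\mu\kappa)T_{\mu\nu}=0$ at $p$. Since $p\in\Sigma_c$ was arbitrary, the identity holds on all of $\Sigma_c$.

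Third, I would record the role of the regularity hypothesis $\nabla_\mu\kappa\neq0$. It is not needed to obtain the identity itself; at degenerate points the identity holds trivially, as the earlier Remark notes. Its function is to make the statement nontrivial. Using the normal (\ref{NormalVector}) where $\nabla_\alpha\kappa\nabla^\alpha\kappa\neq0$, the condition is equivalent to $n^\mu T_{\mu\nu}=0$, meaning the normal momentum flux across $\Sigma_c$ vanishes. When $\nabla\kappa$ is null, the condition is stated directly with $\nabla^\mu\kappa$.

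The main obstacle is the finiteness of $\nabla^\mu T_{\mu\nu}$ at $p$. The divided form (\ref{NonConservationLaw}) suggests that $\nabla^\mu T_{\mu\nu}$ might blow up like $1/\kappa$ as one approaches $\Sigma_c$. If that happened, $\kappa\nabla^\mu T_{\mu\nu}$ could tend to a nonzero limit and the argument would fail. The smoothness assumption on the solution rules this out, so I would state explicitly that the proposition concerns solutions with $T_{\mu\nu}$ at least $C^1$ across $\Sigma_c$. As a consistency check, I would also obtain the identity by continuity from $M\setminus\Sigma_c$: multiply (\ref{NonConservationLaw}) by $\kappa$ and take the limit $q\to p$. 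By the regular level-set Proposition, every point of $\Sigma_c$ is such a limit, and the two derivations agree.
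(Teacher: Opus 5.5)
Your proof is correct and is essentially the paper's argument: the paper likewise evaluates the expanded law $(\nabla^\mu\kappa)T_{\mu\nu}+\kappa\nabla^\mu T_{\mu\nu}=0$ on $\Sigma_c$ and drops the second term because $\kappa=0$ there, relying only on smoothness of the fields. Your extra steps — deriving $\nabla^\mu(\kappa T_{\mu\nu})=0$ from the Bianchi identity, noting that $\nabla_\mu\kappa\neq0$ is not needed for the identity itself, and checking the result by continuity from $M\setminus\Sigma_c$ — are sound supplements rather than a different route.
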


\begin{proof}
The result follows directly from Eq.~(\ref{ExpandedConservation}). Since $\kappa=0$ on $\Sigma_c$, the term $\kappa\nabla^\mu T_{\mu\nu}$ vanishes identically, leaving
\[
(\nabla^\mu\kappa)T_{\mu\nu}=0.
\]
No further assumptions are required beyond the smoothness of the fields.
\end{proof}

The geometric meaning of this condition becomes particularly transparent on regular critical hypersurfaces. Since $\nabla_\mu\kappa$ is normal to $\Sigma_c$, Eq.~(\ref{CriticalCondition}) implies that
\begin{equation}
n^\mu T_{\mu\nu}=0,
\label{NormalFluxCondition}
\end{equation}
where $n^\mu$ is the unit normal defined in Eq.~(\ref{NormalVector}).

Equation (\ref{NormalFluxCondition}) states that the stress-energy tensor possesses no component along the normal direction to the critical surface. Equivalently, the flux of stress-energy through $\Sigma_c$ must vanish.

Thus, not every solution of the field equations can cross a critical coupling surface smoothly. Regular crossings are possible only when the matter distribution satisfies the compatibility condition (\ref{CriticalCondition}).

This result shows that $\Sigma_c$ behaves neither as a singular boundary nor as a generic hypersurface. Instead, it acts as a constrained transition surface whose crossing is governed by a nontrivial matching condition inherited from the conservation law.

\section{Applications}

\subsection{FLRW cosmology and the critical density}

The critical-surface formalism acquires a particularly transparent interpretation in homogeneous and isotropic cosmology. We use the signature $(+---)$ and take the spatially flat FLRW metric in the form
\begin{equation}
ds^2=dt^2-a^2(t)d\vec{x}^{\,2}.
\end{equation}
As the matter sources, consider a comoving perfect fluid given by,
\begin{equation}
T_{\mu\nu}=(\rho+p)u_\mu u_\nu-pg_{\mu\nu},
\qquad
u^\mu=(1,0,0,0).
\end{equation}
Now we choose the trace-dependent model

\begin{equation}
\kappa(T)=8\pi G-\lambda T,
\label{FLRWKappa}
\end{equation}

and a barotropic equation of state

\begin{equation}
p=w\rho.
\label{BarotropicFLRW}
\end{equation}

Since

\begin{equation}
T=\rho-3p=(1-3w)\rho,
\end{equation}

the critical condition $\kappa(T)=0$ yields

\begin{equation}
\rho_c
=
\frac{8\pi G}
{\lambda(1-3w)},
\qquad
w\neq\frac13.
\label{CriticalDensityFLRW}
\end{equation}

This density coincides with the critical density previously identified in cosmological studies of the theory (see \cite{Teruel2018}). In the present framework, however, it acquires a new geometrical interpretation: $\rho_c$ is precisely the density at which the cosmological evolution reaches the critical coupling surface $\Sigma_c$.

The modified Friedmann equation can be written as

\begin{equation}
H^2
=
\frac{1}{3}\kappa(T)\rho
+
\frac{\Lambda}{3},
\label{FriedmannCritical}
\end{equation}

or equivalently,

\begin{equation}
H^2
=
\frac{8\pi G}{3}\rho
-\frac{\lambda}{3}(1-3w)\rho^2
+\frac{\Lambda}{3}.
\end{equation}

At the critical density, the matter contribution is completely screened and Eq.~(\ref{FriedmannCritical}) reduces to

\begin{equation}
H^2=\frac{\Lambda}{3}.
\end{equation}

Thus, the matter sector is locally screened from the gravitational field equations. For $\Lambda>0$, the dynamics is governed by an effective de Sitter phase.

The conservation law derived in the previous section imposes an additional and highly nontrivial restriction. Since

\begin{equation}
\nabla_\mu\kappa
=
-\lambda\nabla_\mu T,
\end{equation}

the compatibility condition

\begin{equation}
(\nabla^\mu\kappa)T_{\mu\nu}=0
\label{CosmoCompatibility}
\end{equation}

must hold on the critical surface.

\begin{theorem}[Invariance of the critical density]
Consider a spatially flat FLRW universe filled with a barotropic perfect fluid and governed by the coupling (\ref{FLRWKappa}). Let $\rho_c$ be given by (\ref{CriticalDensityFLRW}). Then every regular cosmological solution satisfies

\begin{equation}
\dot{\rho}=0
\end{equation}

whenever $\rho=\rho_c$.
\end{theorem}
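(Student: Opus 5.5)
The plan is to reduce the critical compatibility condition $(\nabla^\mu\kappa)T_{\mu\nu}=0$ (Proposition on the critical compatibility condition) to the FLRW setting. In a comoving frame all fields depend only on $t$. For the coupling (\ref{FLRWKappa}) with $p=w\rho$ we have $T=(1-3w)\rho$, so
\[
\nabla_\mu\kappa=-\lambda(1-3w)\dot\rho\,\delta^0_\mu .
\]
The gradient of $\kappa$ is therefore purely timelike, and the critical set $\Sigma_c$ is the spacelike slice $t=t_c$ on which $\rho(t_c)=\rho_c$.

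Next we contract with the perfect-fluid tensor. With $u^\mu=(1,0,0,0)$ and signature $(+---)$ one has $g^{00}=1$ and $T_{00}=\rho$. This gives
\[
(\nabla^\mu\kappa)T_{\mu\nu}=g^{00}\partial_0\kappa\,T_{0\nu}=-\lambda(1-3w)\dot\rho\,\rho\,\delta^0_\nu ,
\]
because $T_{0i}=0$. Evaluating at $t_c$, the compatibility condition, which holds for every smooth solution since $\kappa(t_c)=0$, yields
\[
\lambda(1-3w)\rho_c\dot\rho(t_c)=0 .
\]
Here $\lambda>0$ and $w\neq1/3$, and $\rho_c\neq0$ by (\ref{CriticalDensityFLRW}). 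Hence $\dot\rho=0$ whenever $\rho=\rho_c$.

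As a cross-check, and because the result depends only on (\ref{ExpandedConservation}), I would also derive it directly from the $\nu=0$ component of $\nabla^\mu(\kappa T_{\mu\nu})=0$. This component reads
\[
\frac{d}{dt}(\kappa\rho)+3H\kappa(\rho+p)=0 ,
\]
that is,
\[
\dot\kappa\,\rho+\kappa\left[\dot\rho+3H(1+w)\rho\right]=0 .
\]
At $\kappa=0$ only $\dot\kappa\rho=-\lambda(1-3w)\dot\rho\rho$ survives, so $\dot\rho=0$ again. The spatial components vanish identically by homogeneity and impose nothing further.

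The main obstacle is interpretive rather than computational. If $\dot\rho(t_c)=0$, then $\nabla_\mu\kappa=0$ at $t_c$, so the critical point is degenerate in the sense of the Remark on degenerate critical points, not regular. The phrase ``regular cosmological solution'' must therefore mean a solution with smooth $\rho$, $H$ and metric across $t_c$, and must not presuppose that $\Sigma_c$ is a regular hypersurface. I would state explicitly that smoothness of $\rho(t)$ is the only hypothesis used, so the Proposition's computation remains valid pointwise even when $\nabla\kappa$ vanishes. The corollary is then that a smooth FLRW solution can touch $\rho_c$ only at a stationary point of $\rho$ and cannot cross it transversally, which I would note as the physical content of the theorem.
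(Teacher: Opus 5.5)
Your proposal is correct and follows the paper's proof: the $\nu=0$ component of $(\nabla^\mu\kappa)T_{\mu\nu}=0$ gives $-\lambda\dot T\,\rho=0$, and then $\rho_c\neq0$ and $w\neq1/3$ yield $\dot\rho=0$. Your remark that this condition holds pointwise from Eq.~(\ref{ExpandedConservation}) by smoothness alone is a worthwhile clarification the paper glosses over, since it does not need the hypothesis $\nabla_\mu\kappa\neq0$, which the conclusion itself violates. One side note, which does not affect your argument: by the first integral (\ref{IntegratedFLRWConservation}), reaching $\rho_c$ forces $\mathcal C=0$ and hence $\rho\equiv\rho_c$, so $\Sigma_c$ is not a single slice $t=t_c$.
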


\begin{proof}
For a spatially flat FLRW spacetime with signature $(+---)$,
the comoving four-velocity is $u^\mu=(1,0,0,0)$ and $
T_{00}=\rho.$
Since $T=T(t)$, one has
\[
\nabla_\mu\kappa
=
(-\lambda\dot T,0,0,0).
\]
Raising the index with the FLRW metric gives
\[
\nabla^0\kappa=-\lambda\dot T.
\]
The $\nu=0$ component of the compatibility condition
\[
(\nabla^\mu\kappa)T_{\mu\nu}=0
\]
therefore yields
\[
-\lambda \dot T\,\rho=0.
\]
For any physical cosmological solution with $\rho\neq0$, it follows that
\[
\dot T=0.
\]
Using $T=(1-3w)\rho$ and $w\neq1/3$, one obtains immediately
\[
\dot\rho=0.
\]
\end{proof}

The theorem shows that the critical density is not merely a distinguished value of the matter density. Rather, it defines an invariant critical surface of the cosmological dynamics. Regular FLRW solutions cannot cross $\Sigma_c$ transversely with nonzero $\dot\rho$.

A stronger statement follows by using the full conservation equation. For the FLRW geometry, the fundamental conservation law can be written as
\begin{equation}
\frac{d}{dt}\big(\kappa\rho\big)
+
3H\kappa(\rho+p)
=
0.
\label{FLRWConservation}
\end{equation}

For the barotropic model considered above, this becomes
\begin{equation}
\left(\kappa+\rho\frac{d\kappa}{d\rho}\right)\dot\rho
+
3H\kappa(1+w)\rho
=
0.
\label{DensityEvolution}
\end{equation}

Near the critical density, let
\begin{equation}
\delta=\rho-\rho_c.
\end{equation}
Since
\[
\kappa(\rho)
=
-\lambda(1-3w)\delta,
\]
Eq.~(\ref{DensityEvolution}) gives, to leading order,
\begin{equation}
\dot\delta
=
-3H(1+w)\delta
+
O(\delta^2).
\label{LinearizedCritical}
\end{equation}

Thus $\delta=0$ is an invariant solution of the local density dynamics. In particular,
\begin{equation}
\dot\rho\big|_{\rho=\rho_c}=0,
\qquad
\ddot\rho\big|_{\rho=\rho_c}=0,
\end{equation}
provided $H$ remains finite at the critical density.

For an expanding branch with $H>0$ and ordinary matter satisfying $w>-1$, Eq.~(\ref{LinearizedCritical}) shows that perturbations around the critical density decay exponentially. Therefore, the critical density is approached as a dynamical screening boundary rather than crossed transversely. This supports the interpretation of $\Sigma_c$ as an invariant critical surface separating the attractive and repulsive sectors.
The separatrix character of the critical density can also be understood from a global perspective. Rewriting the conservation law (\ref{FLRWConservation}) in the form

\begin{equation}
\frac{d}{dt}
\left[
a^{3(1+w)}
\kappa(\rho)\rho
\right]
=0,
\end{equation}

one obtains the first integral

\begin{equation}
a^{3(1+w)}
\kappa(\rho)\rho=
\mathcal C,
\label{IntegratedFLRWConservation}
\end{equation}

where $\mathcal C$ is determined by the initial conditions. Solutions reaching the critical density belong to the special branch
$\mathcal C=0.$ Generic cosmological solutions with $\mathcal C\neq0$
cannot reach the critical surface. This provides a global
interpretation of $\rho_c$ as a separatrix of the cosmological phase
space.

\begin{corollary}[Separatrix character of the critical density]
The critical density $\rho_c$ defines an invariant hypersurface of the cosmological phase space. Regular FLRW trajectories cannot cross $\rho=\rho_c$ transversely. Consequently, the critical density acts as a dynamical separatrix between the sectors $\kappa>0$ and $\kappa<0$.
\end{corollary}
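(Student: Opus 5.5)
The corollary has three parts: invariance, no transverse crossing, and the separatrix property. I would establish them in that order, using the Theorem on the invariance of the critical density together with the first integral (\ref{IntegratedFLRWConservation}). For a barotropic fluid in flat FLRW the phase space is effectively the density $\rho$ (with $a$ or $H$ as auxiliary variables). The hypersurface is $\{\rho=\rho_c\}=\{\kappa=0\}$, so invariance means that the vector field of (\ref{DensityEvolution}) is tangent to it.

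\emph{Invariance.} The Theorem gives $\dot\rho=0$ at $\rho=\rho_c$ for every regular solution. To upgrade this to invariance I would use the first integral. Any trajectory with $\rho(t_0)=\rho_c$ has $\kappa(\rho(t_0))=0$, hence $\mathcal C=0$. With $a>0$, this forces $\kappa(\rho)\rho=0$ for all $t$ on that branch. Since $\rho\neq0$ for physical solutions, we get $\kappa(\rho(t))=0$, that is, $\rho(t)\equiv\rho_c$ for all time. This is exactly invariance. It also reproduces $\dot\rho=\ddot\rho=0$ from (\ref{LinearizedCritical}). Alternatively, I would note that (\ref{DensityEvolution}) can be written as $\dot\rho=F(\rho,H)$ with $F(\rho_c,H)=0$. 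Uniqueness of solutions for the smooth ODE near $\rho_c$ then implies that the constant solution is the only one through $\rho_c$.

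\emph{No transverse crossing.} A transverse crossing would require $\dot\rho\neq0$ at $\rho=\rho_c$, which the Theorem excludes directly. I would also exclude non-transverse (tangential) crossings. By the previous step, any trajectory touching $\rho_c$ stays there. Hence a trajectory starting with $\kappa>0$ keeps $\mathcal C\neq0$ and has fixed sign. Indeed, $\mathcal C=a^{3(1+w)}\kappa\rho$ with $a,\rho>0$, so $\operatorname{sign}\kappa=\operatorname{sign}\mathcal C$ is conserved. Therefore $\mathcal M_+$ and $\mathcal M_-$ are each invariant, and the invariant set $\{\mathcal C=0\}$ separates them. That is the separatrix statement.

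\emph{Main obstacle.} The delicate step is uniqueness at $\rho_c$. The coefficient $\kappa+\rho\,d\kappa/d\rho$ in (\ref{DensityEvolution}) equals $\rho_c\,d\kappa/d\rho=-\lambda(1-3w)\rho_c\neq0$ at the critical point, so it is nonvanishing nearby. The ODE is therefore regular there, provided $H$ is finite. Finiteness of $H$ follows from the Friedmann equation (\ref{FriedmannCritical}), which gives $H^2=\Lambda/3$ on $\Sigma_c$; I would impose $\Lambda\ge0$ and a chosen expansion branch. I would state these regularity assumptions explicitly, since "regular trajectory" is meant in exactly this sense. The sign-conservation argument via $\mathcal C$ avoids linearization altogether, so I would present it as the main proof.
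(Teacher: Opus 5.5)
Your argument is correct, but it rests on a different key step than the paper's. The paper's proof is local. It combines $\dot\rho|_{\rho=\rho_c}=0$ from the invariance theorem with the linearized flow (\ref{LinearizedCritical}), $\dot\delta=-3H(1+w)\delta+O(\delta^2)$, and reads off $\delta=0$ as a fixed point. The first integral (\ref{IntegratedFLRWConservation}) appears in the text only as a supporting ``global interpretation.'' You make that first integral the proof itself. Because $\mathcal C=a^{3(1+w)}\kappa\rho$ is constant along any regular solution and $a>0$, $\kappa$ can never vanish, and hence never change sign, when $\mathcal C\neq0$. Conversely, $\mathcal C=0$ forces $\kappa\rho\equiv0$, i.e.\ $\rho(t)\in\{0,\rho_c\}$, and continuity then gives $\rho\equiv\rho_c$. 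This buys real rigour. Having $\dot\rho=0$ at a single point only excludes transverse crossings. Concluding from the linearized equation that the constant solution is the \emph{only} trajectory through $\rho_c$ tacitly requires a uniqueness theorem, which the paper leaves implicit. Your argument is exact and non-perturbative, excludes tangential crossings as well, and needs only $a>0$, finite $H$ (so that the integral holds) and continuity of $\rho(t)$. Once you lead with the first integral, the appeal to the Theorem and your ``main obstacle'' discussion of uniqueness become redundant; the ODE route can stay as a cross-check. The paper's local form shows the attracting character of $\rho_c$ for $H>0$, $w>-1$ more directly. Near $\rho_c$ this is also visible in your integral, since $|\kappa\rho|\propto a^{-3(1+w)}$ decreases as the universe expands. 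Your restriction $\Lambda\ge0$ costs nothing: for $\Lambda<0$, Eq.~(\ref{FriedmannCritical}) gives $H^2=\Lambda/3<0$ at $\rho_c$, so the critical density is unreachable and the statement holds vacuously.
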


\begin{proof}
The result follows directly from the invariance condition $\dot\rho=0$
at $\rho=\rho_c$ together with the local evolution equation (\ref{LinearizedCritical}).
\end{proof}

\begin{figure}[t]
\centering

\begin{tikzpicture}[scale=1.05]

\draw[->,thick] (0,0) -- (7,0) node[right] {$t$};
\draw[->,thick] (0,0) -- (0,4.5) node[above] {$\rho(t)$};

\fill[green!10] (0.4,0.2) rectangle (6.6,3.2);
\fill[blue!10] (0.4,3.2) rectangle (6.6,4.2);

\draw[very thick,red] (0.4,3.2) -- (6.6,3.2);

\draw[very thick,black,->]
plot[smooth,domain=0.6:6.1,samples=150]
(\x,{3.2 - 2.2*exp(-0.65*\x)});

\draw[dashed,red] (6.15,0.2) -- (6.15,3.2);

\node[red,above] at (2.0,3.2) {$\rho=\rho_c$};

\node[green!50!black] at (1.4,1.0) {$\kappa>0$};

\node[blue!60!black] at (1.4,3.75) {$\kappa<0$};

\node[align=center] at (4.5,1.25)
{\small regular FLRW evolution};

\node[align=center] at (5.1,4.0)
{\small no transverse crossing};

\end{tikzpicture}

\caption{
Schematic representation of the invariant cosmological critical density.
For the trace-dependent model, regular FLRW evolution approaches
$\rho_c$ asymptotically but cannot cross it transversely. The critical
density therefore acts as a dynamical screening boundary between the
$\kappa>0$ and $\kappa<0$ sectors.
}
\label{FigCriticalDensity}

\end{figure}

\subsection{Compact objects}

The compatibility condition derived in Sec.~4 also imposes nontrivial restrictions on compact astrophysical configurations. We continue to use the signature $(+---)$ and consider a static, spherically symmetric spacetime with line element

\begin{equation}
ds^2
=
e^{2\Phi(r)}dt^2
-
e^{2\Lambda(r)}dr^2
-
r^2d\Omega^2,
\end{equation}

filled again with a perfect fluid $T_{\mu\nu}=
(\rho+p)u_\mu u_\nu-p\,g_{\mu\nu}.$
For a static configuration, all matter variables depend only on the radial coordinate, and therefore $\kappa=\kappa(r)$. Consequently,

\begin{equation}
\nabla_\mu\kappa
=
\kappa'(r)\delta^r_{\mu}.
\end{equation}

Let $r=r_c$ be a regular critical surface satisfying

\begin{equation}
\kappa(r_c)=0,
\qquad
\kappa'(r_c)\neq0.
\end{equation}

The compatibility condition

\begin{equation}
(\nabla^\mu\kappa)T_{\mu\nu}=0
\end{equation}

then reduces to

\begin{equation}
T_{r\nu}\big|_{r=r_c}=0.
\end{equation}

For a perfect fluid at rest in the static frame, the only nonvanishing radial stress component is proportional to the pressure.

\begin{proposition}[Critical surfaces in perfect-fluid stars]
Let a static, spherically symmetric compact object be described by a perfect fluid with pressure $p(r)$. If $r=r_c$ is a regular critical coupling surface, then
\begin{equation}
p(r_c)=0.
\end{equation}
\end{proposition}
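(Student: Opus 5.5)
The argument applies the critical compatibility condition $(\nabla^\mu\kappa)T_{\mu\nu}=0$ from the Proposition of Sec.~IV at $r=r_c$. Its hypotheses hold there: $\kappa(r_c)=0$ and $\nabla_\mu\kappa=\kappa'(r)\delta^r_\mu\neq0$ at $r_c$, so $r=r_c$ is a regular critical hypersurface. We assume the solution is smooth across $r_c$.

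The first step is to raise the index on the gradient. The metric is diagonal, so $\nabla^\mu\kappa=g^{\mu\nu}\nabla_\nu\kappa$ has only a radial component,
\[
\nabla^r\kappa=g^{rr}\kappa'(r)=-e^{-2\Lambda(r)}\kappa'(r).
\]
This is nonzero at $r_c$, because $e^{-2\Lambda}$ is strictly positive wherever the metric is regular. The compatibility condition therefore reduces to $-e^{-2\Lambda(r_c)}\kappa'(r_c)\,T_{r\nu}(r_c)=0$, and hence $T_{r\nu}(r_c)=0$ for every $\nu$.

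The second step is to evaluate $T_{r\nu}$ for the static perfect fluid. The fluid is at rest in the static frame, so $u^\mu=(e^{-\Phi},0,0,0)$ and $u_\mu=(e^{\Phi},0,0,0)$, which gives $u_r=0$. Then
\[
T_{r\nu}=(\rho+p)u_ru_\nu-p\,g_{r\nu}=-p\,g_{r\nu}.
\]
For $\nu\neq r$ this vanishes identically, so those components impose no constraint. The only nontrivial component is $T_{rr}=-p\,g_{rr}=p\,e^{2\Lambda}$. Since $e^{2\Lambda(r_c)}\neq0$, the condition $T_{rr}(r_c)=0$ forces $p(r_c)=0$.

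The one point needing care is the regularity of the metric functions at $r_c$. If $e^{2\Lambda}$ vanished or diverged there, for example at a horizon, the factor $g^{rr}$ could become degenerate and the conclusion would fail. We therefore state explicitly that the regular-critical-surface hypothesis includes finiteness and nondegeneracy of $\Phi$ and $\Lambda$ at $r_c$. This is consistent with the smoothness assumptions of the regularity theorem. With that in place the argument is purely algebraic.

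Finally, observe that $\rho$ is left unconstrained. So $r_c$ behaves like a pressure-free surface, for instance the stellar surface, which is consistent with the screening interpretation.
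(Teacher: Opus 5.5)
Your proof is correct and follows the same route as the paper. The compatibility condition $(\nabla^\mu\kappa)T_{\mu\nu}=0$ with a purely radial gradient forces $T_{r\nu}(r_c)=0$, and for a static perfect fluid the only nontrivial component is $T_{rr}=p\,e^{2\Lambda}$, giving $p(r_c)=0$. You make explicit the index raising, the vanishing of $u_r$, and the finiteness and nondegeneracy of $g_{rr}$ at $r_c$, all of which the paper's one-line argument leaves implicit.
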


\begin{proof}
Since $\nabla_\mu\kappa$ is purely radial, the compatibility condition implies that the radial projection of the stress-energy tensor must vanish on the critical surface. For a perfect fluid, this projection is precisely the pressure. Hence $p(r_c)=0$.
\end{proof}

The converse statement does not generally hold. A vanishing pressure layer does not necessarily define a critical surface, since the additional condition
$\kappa(r_c)=0$ must also be satisfied.

This observation has an important physical consequence. In ordinary perfect-fluid stellar models, the pressure is strictly positive throughout the interior and vanishes only at the stellar boundary. Moreover, the density typically approaches zero near the surface. For the representative model
$\kappa(T)
= 8\pi G-\lambda(\rho-3p)$, one therefore finds
$\kappa
\approx
8\pi G $ near the stellar surface, so that the critical condition $\kappa=0$ is generally not satisfied.

Consequently, regular critical coupling surfaces appear to be absent in ordinary perfect-fluid stars. The compatibility condition strongly restricts their existence and prevents the formation of generic critical layers within regions of positive pressure.

The situation changes significantly for more general forms of matter. For anisotropic fluids,

\begin{equation}
T^\mu_{\ \nu}
=
\mathrm{diag}(\rho,-p_r,-p_t,-p_t),
\end{equation}

the compatibility condition admits a simple geometric interpretation.
Let
\begin{equation}
n_\mu=\nabla_\mu\kappa,
\end{equation}
denote the normal vector to a regular critical hypersurface. Then
\begin{equation}
(\nabla^\mu\kappa)T_{\mu\nu}=0,
\end{equation}
can be written as
\begin{equation}
T^\mu{}_\nu n^\nu=0.
\label{KernelCondition}
\end{equation}
Thus, the normal direction to the critical surface belongs to the kernel
of the stress-energy tensor. For a static anisotropic fluid with a radial
critical surface,
\[
n_\mu\propto\delta^r_\mu,
\]
Eq.~(\ref{KernelCondition}) reduces immediately to
\begin{equation}
p_r(r_c)=0,
\end{equation}
while the tangential pressure $p_t$ may remain nonzero.
Therefore, a regular critical layer is characterized by the vanishing of the principal stress in the direction normal to the hypersurface, while the density and tangential pressure may remain finite and nonzero. Internal critical layers are therefore not excluded in anisotropic configurations. This suggests that critical coupling surfaces may be more naturally associated with exotic compact objects, gravastars \cite{gravastar}, anisotropic stars \cite{Taser2023}, or other configurations possessing nontrivial internal structure.\\ 
An interesting observation in this context comes from recent studies of anisotropic compact stars in $\kappa(R,T)$ gravity \cite{Taser2023}. In particular, solutions constructed for the model $\kappa(T)=8\pi G-\lambda T$ exhibit nontrivial radial profiles of the effective coupling, including oscillatory behavior and stationary points near the stellar core. Although those configurations do not appear to reach the critical regime $\kappa=0$, the present analysis suggests a possible geometric interpretation of such features. Since the conservation law is governed by the combination $\nabla^\mu(\kappa T_{\mu\nu})=0$, variations of the coupling function are dynamically linked to the matter distribution. From this perspective, oscillations or extrema of $\kappa(r)$ may be viewed as precursors of critical-coupling behavior, reflecting the tendency of the system to approach configurations where the effective gravitational source $\kappa T_{\mu\nu}$ is strongly modified. This observation is particularly intriguing in anisotropic stars, where the compatibility condition reduces to $p_r(r_c)=0$ and therefore allows the existence of internal critical layers that are forbidden in ordinary perfect-fluid configurations.

Determining whether such solutions exist requires solving the corresponding equilibrium equations together with the critical matching condition, a problem that lies beyond the scope of the present work.
\begin{figure}[t]
\centering

\begin{tikzpicture}[scale=1.05,>=stealth]


\fill[green!10] (0,0) circle (2.4);
\draw[thick,black!70] (0,0) circle (2.4);

\fill[blue!10] (0,0) circle (1.15);
\draw[thick,blue!60!black] (0,0) circle (1.15);

\draw[very thick,red] (0,0) circle (1.55);

\node[green!50!black] at (0,2.75)
{$\kappa>0$};

\node[blue!60!black,align=center] at (0,-0.15)
{\small $\kappa<0$\\[-0.1em]
\small repulsive core};

\node[red,fill=white,inner sep=1pt]
at (0,2.02)
{$\Sigma_c:\ \kappa=0$};

\node at (0,-2.85)
{\small anisotropic compact object};

\draw[->,thick,black!70]
(-1.0,1.25) arc (128:78:1.55);

\node[black!70]
at (-1.05,1.78)
{$p_t$};

\draw[->,thick,black!70]
(0.15,0.2) -- (0.9,0.72);

\node[black!70]
at (1.45,0.92)
{$p_r$};

\draw[->,very thick,blue]
(1.55,0) -- (2.45,0);

\node[blue,right]
at (2.45,0)
{$n_\mu=\nabla_\mu\kappa$};

\draw[dashed,black!40]
(1.55,0.25) -- (3.25,1.35);

\draw[dashed,black!40]
(1.55,-0.25) -- (3.25,-1.35);


\begin{scope}[shift={(5.4,0)}]

\fill[green!10]
(-1.25,-1.55) rectangle (0,1.55);

\fill[blue!10]
(0,-1.55) rectangle (1.25,1.55);

\draw[very thick,red]
(0,-1.55) -- (0,1.55);

\node[red,above]
at (0,1.6)
{$\Sigma_c$};

\draw[->,very thick,blue]
(0,0) -- (1.55,0);

\node[blue,above]
at (1.05,0.08)
{$n_\mu$};

\node[green!50!black,align=center]
at (-0.72,0.55)
{\small $\kappa>0$\\[-0.1em]
\small attractive};

\node[blue!60!black,align=center]
at (0.75,-0.55)
{\small $\kappa<0$\\[-0.1em]
\small repulsive};

\node[
draw,
rounded corners,
fill=white,
inner sep=6pt,
align=center
]
at (0,-2.2)
{
\small $T^\mu{}_\nu n^\nu=0$\\[-0.15em]
\small $p_r(r_c)=0$
};

\end{scope}

\end{tikzpicture}

\caption{
Schematic representation of a regular critical layer in an anisotropic compact configuration. The critical hypersurface $\Sigma_c$ separates attractive ($\kappa>0$) and repulsive ($\kappa<0$) sectors. The compatibility condition implies that the normal vector $n_\mu=\nabla_\mu\kappa$ belongs to the kernel of the stress-energy tensor, $T^\mu{}_\nu n^\nu=0 $. For anisotropic matter, this yields
$p_r(r_c)=0$, providing a geometric characterization of regular critical layers.}
\label{FigCriticalLayer}
\end{figure}

\section{Conclusions}

In this work we have investigated the critical regime
$\kappa(R,T)=0$ within $\kappa(R,T)$ gravity. We have shown that the apparent singularity commonly associated with the non-conservation law is not a singularity of the fundamental equations, and the underlying conservation equation remains regular and leads naturally to the compatibility condition
$(\nabla^\mu\kappa)T_{\mu\nu}=0$, which governs the behavior of matter on regular critical surfaces.

A central contribution of this work is that the set $\Sigma_c={p\in M:\kappa(R,T)(p)=0}$ can be interpreted as a genuine geometric object of the theory. Under suitable regularity assumptions, $\Sigma_c$ defines a smooth hypersurface separating regions with different effective gravitational behavior. This provides a natural geometric interpretation of the critical coupling regime and suggests that $\kappa(R,T)$ gravity possesses a richer internal structure than is apparent from analyses restricted to the sector $\kappa\neq0$.

An additional consequence of the present analysis concerns the interpretation of $\kappa(R,T)$ gravity itself. Although field equations may be locally rewritten in terms of an effective tensor $\widetilde T_{\mu\nu}=\kappa(R,T)T_{\mu\nu}$, the existence of critical hypersurfaces where $\kappa=0$ destroys the invertibility of this transformation. Consequently, unlike Rastall gravity, the theory does not admit a global reformulation as GR with a redefined energy-momentum tensor. The critical surfaces therefore represent genuinely new geometric structures rather than artifacts of a particular choice of variables.

The cosmological and astrophysical applications considered here illustrate the physical relevance of this point of view. In FLRW cosmology, the critical density becomes an invariant dynamical boundary that regular solutions cannot cross transversely. In compact objects, the compatibility condition strongly constrains the existence of critical layers and suggests that they may be more naturally realized in anisotropic configurations than in ordinary perfect-fluid stars.

Taken together, these results indicate that the geometry of the coupling function itself plays a nontrivial role in the theory. Rather than being merely a variable gravitational constant, $\kappa(R,T)$ appears to generate a stratified structure of spacetime composed of distinct coupling sectors separated by critical hypersurfaces. The study of such surfaces opens a new perspective on $\kappa(R,T)$ gravity and may provide further insight into its cosmological and astrophysical implications.


\begin{thebibliography}{99}

\bibitem{BransDicke1961}
C.~Brans and R.~H.~Dicke,
``Mach's Principle and a Relativistic Theory of Gravitation,''
Phys.\ Rev.\ \textbf{124}, 925--935 (1961).

\bibitem{FujiiMaeda2003}
Y.~Fujii and K.~Maeda,
``The Scalar--Tensor Theory of Gravitation'',
Cambridge University Press, Cambridge (2003).

\bibitem{Faraoni2004}
V.~Faraoni,
``Cosmology in Scalar--Tensor Gravity'',
Kluwer Academic Publishers, Dordrecht (2004).

\bibitem{SotiriouFaraoni2010}
T.~P.~Sotiriou and V.~Faraoni,
``$f(R)$ Theories of Gravity,''
Rev.\ Mod.\ Phys.\ \textbf{82}, 451--497 (2010).

\bibitem{DeFeliceTsujikawa2010}
A.~De Felice and S.~Tsujikawa,
``$f(R)$ Theories,''
Living Rev.\ Relativ.\ \textbf{13}, 3 (2010).

\bibitem{Harko2011}
T.~Harko, F.~S.~N.~Lobo, S.~Nojiri and S.~D.~Odintsov,
``$f(R,T)$ Gravity,''
Phys.\ Rev.\ D \textbf{84}, 024020 (2011).

\bibitem{Teruel2018}
G.~R.~P.~Teruel,
``$\kappa(R,T)$ gravity,''
Eur. Phys. J. C \textbf{78}, 660 (2018).

\bibitem{Ahmed2021}
N. Ahmed and A. Pradhan, ``Probing cosmic acceleration in $\kappa(R,T)$ gravity,''
Indian J Phys, \textbf{96}, 301 (2022)

\bibitem{Dixit2022}
A. Dixit, A. Pradhan, R. Chaubey
``Cosmological scenario in $\kappa(R,T)$ gravity,''
Int.\ J.\ Geom.\ Methods Mod.\ Phys. \textbf{19},2250013 (2022).


\bibitem{Dixit2023}
A. Dixit, S. Gupta, A. Pradhan, A. Beesham, ``Thermodynamics of the Acceleration of the Universe in the $\kappa(R,T)$ Gravity Model.'' Symmetry, \textbf{15}(2), 549 (2023).

\bibitem{TeruelSinghRahaman2022}
G.~R.~P.~Teruel,
K.~Newton Singh,
F.~Rahaman,
T.~Chowdhury,
``Possible existence of stable compact stars in
$\kappa(\mathcal R,\mathcal T)$ gravity,''
Int. \ J. \ Mod. \ Phys.\ A, \textbf{37}(31n32), 2250194 (2022)

\bibitem{Taser2023}
A.~Ta\c{s}er and F.~Do\u{g}ru,
``Compact stars in $\kappa(R,T)$ gravity,''
Astrophys.\ Space\ Sci. \textbf{368}, \ 6 
(2023).

\bibitem{Singh2024}
 K. N. Singh, G. R. P. Teruel, S. K. Maurya, T. Chowdhury and F. Rahaman, “Conservative wormholes
in generalized $\kappa(R, T)$- function”, JHEAp \textbf{44}, 132 (2024).

\bibitem{Sarkar}
S. Sarkar, N. Sarkar, F. Rahaman and Y. Aditya, “Wormholes in $\kappa (R, T)$ gravity”, arXiv:2207.12403
[gr-qc].

\bibitem{gravastar}
G.~R.~P.~Teruel,
K.~Newton Singh,
F.~Rahaman,
T.~Chowdhury, 
M.~Mondal,
``Testing of $\kappa(R,T)$-gravity through gravastar configurations,''
Phys. Dark Universe, \textbf{43}, 101404 (2024)

\bibitem{Rastall1972}
P.~Rastall,
``Generalization of the Einstein theory,''
Phys. Rev. D \textbf{6}, 3357 (1972).

\bibitem{Visser2018}
M.~Visser,
``Rastall gravity is equivalent to Einstein gravity,''
Phys.\ Lett.\ B \textbf{782}, 83--86 (2018).

\bibitem{Golovnev2024}
A.~Golovnev,
`` More on the fact that Rastall = GR'', Ann. Phys. \textbf{461} 169580. 



\bibitem{Lee}
J.~M.~Lee,
``Introduction to Smooth Manifolds,''
Springer, 2nd ed. (2012).

\bibitem{Tu}
L.~W.~Tu,
``An Introduction to Manifolds,''
Springer (2011).

\bibitem{FujiiMaeda}
Y.~Fujii and K.~Maeda,
``The Scalar-Tensor Theory of Gravitation,''
Cambridge University Press (2003).

\end{thebibliography}
\end{document}